\theoremstyle{plain}
\newtheorem{theorem}{Theorem}[section]
\theoremstyle{definition}
\theoremstyle{remark}
\numberwithin{equation}{section}
\numberwithin{table}{section}
\numberwithin{figure}{section}
\begin{document}

\title[Eigenpairs of a Sylvester-Kac type matrix]{The eigenpairs of a Sylvester-Kac type matrix associated with a simple model for one-dimensional deposition and evaporation}


\author{C.M. da Fonseca}
\address{Departamento de Matem\'atica, Universidade de Coimbra, 3001-501 Coimbra, Portugal}
\email{cmf@mat.uc.pt}

\author{Dan A. Mazilu}
\address{Department of Physics and Engineering, Washington and Lee University, Lexington, VA 24450, USA}
\email{mazilud@wlu.edu}

\author{Irina Mazilu}
\address{Department of Physics and Engineering, Washington and Lee University, Lexington, VA 24450, USA}
\email{mazilui@wlu.edu}

\author{H. Thomas Williams}
\address{Department of Physics and Engineering, Washington and Lee University, Lexington, VA 24450, USA}
\email{williamsh@wlu.edu}

\subjclass[2000]{65F15,15A18}

\date{\today}

\keywords{Sylvester-Kac matrix, eigenvalues, eigenvectors}

\begin{abstract}
A straightforward model for deposition and evaporation on discrete cells of a finite array of any dimension leads to a matrix equation involving a Sylvester-Kac type matrix.
The eigenvalues and eigenvectors of the general matrix are determined for an arbitrary number of cells.  A variety of models to which this solution may be applied are discussed.
\end{abstract}

\maketitle

\section{Introduction:  the physical model}

 Consider a set of $n$ cells, arranged on a $D$-dimensional lattice. Each cell of the lattice has two states, empty or filled:  empty cells are filled at a rate $\alpha$; filled cells are emptied at a rate $\beta$.
Let $Q_k$ represent the time-dependent ensemble average probability that $k$ cells of the lattice are filled. This satisfies a rate equation:
\begin{equation}
\frac{d}{dt}\,  Q_k = -((n-k) \alpha + k \beta) Q_k + (n-k+1) \alpha Q_{k-1} + (k+1) \beta Q_{k+1} . \label{rateeq}
\end{equation}

This random sequential model is quite general and versatile, and can be customized to describe a variety of
two-state physical systems that exhibit adsorption and evaporation processes.  One-dimensional sequential adsorption models have been studied thoroughly in different physics contexts \cite{E1993,P1997}. Adsorption in two dimensions is not as well understood, however.
There are quite a few computational papers \cite{E1993} on the topic, but few analytical solutions exist for the general two-dimensional case. The adsorption of particles is exactly solvable in higher dimensions
only for  tree-like lattices.  Recently, analytical results have been reported for the random sequential process \cite{CAP2007} and reaction-diffusion processes on Cayley trees and Bethe lattices \cite{AO2010,GS2008,MAK2007}.
The standard method used to study these systems is the \emph{empty-interval method} \cite{KRB2010}. This mathematical method fails when evaporation is considered. We here demonstrate that a matrix theory approach can lead to exact results for a variety of physical systems.

Two specific experimental topics motivate our paper. One is the self-assembly  mechanism of charged nanoparticles on a glass substrate \cite{I1966}. Known in literature as ionic self-assembled monolayers (ISAM), this technique has been used successfully in making antireflective coatings \cite{dVEH2004,YZHR2006}. Physical properties of these coatings depend upon the surface coverage of the substrate.
The deposition process is stochastic,  with particles attaching to and detaching from the substrate, so a random sequential adsorption model with evaporation is appropriate.

The second motivating experimental setting involves properties of  synthetic polymers called \emph{dendrimers}, which have potential use as a drug delivery mechanism via drug encapsulation \cite{KMKKL2003}.
Dendrimers are physical analogs of Cayley tree structures. They are highly branched, spherical polymers that consist of hydrocarbon chains with various functional groups attached to a central core molecule.
The precise control that can be exerted over their size, molecular architecture, and chemical properties give dendrimers great potential in the pharmaceutical industry as effective carriers for drug molecules.
The attachment and release of the drug molecules is a stochastic process,  and so also can be modeled by a general random sequential  model as discussed herein.

The random sequential model can also address other kinds of problems as diverse as voting behavior and the spread of epidemics \cite{CFL2009}. Epidemic-type models abound in the literature  \cite{DH2000,M2002}, from simple ones that capture only the basic rules of the infection mechanism, to complex models that account for spatial spread, age structure and the possibility of immunization. Epidemic models have been applied successfully in other fields including the social sciences (voter models, rumor spreading models)  and computer science (computer virus propagation in a network) \cite{L1999,PA2009}.

\section{Mathematical model}

The rate equation \eqref{rateeq} easily transforms into an equation for the $n+1$ dimensional vector $Q$ with components $Q_0, Q_1, \ldots , Q_n$ as
$$ \frac{dQ}{dt} = M Q ,$$
in which $M$ is the tridiagonal square matrix
\begin{equation}
   M = \left( \begin{array}{ccccccc}
     -n\alpha & \beta &&&&&\\
     n\alpha & -(n-1)\alpha - \beta & 2\beta &&&& \\
     & (n-1)\alpha & -(n-2)\alpha-2\beta &  3\beta &&& \\
     && (n-2)\alpha & \ddots & \ddots  && \\
     &&& \ddots & \ddots & n\beta &\\
     &&&& \alpha & -n\beta & \\
\end{array} \right) \label{M} .
\end{equation}
The general time-dependent solution for $Q$ is a linear combination of terms of the form $u_k \exp(\lambda_k t)$ where $u_k$ is the $k$-th eigenvector of $M$ and $\lambda_k$ is the corresponding eigenvalue.

There is a long history of interest in and work on the eigenvalue problem for matrices similar to $M$.  The $(n+1)\times (n+1)$ Sylvester-Kac matrix is a tridiagonal matrix with zero main diagonal, superdiagonal $(1,2,\ldots,n)$, and subdiagonal $(n,\ldots,2,1)$, i.e.,
$$
\left(
  \begin{array}{cccccc}
    0 & 1 &&&& \\
    n & 0 & 2 &&& \\
    & n-1 & \ddots & \ddots && \\
    && \ddots & \ddots & n-1 & \\
    &&& 2 & 0 & n \\
    &&&& 1 & 0 \\
  \end{array}
\right) \, .
$$
This matrix and its characteristic polynomial were first presented by Sylvester in a short communication in $1854$ \cite{S1854}. It was conjectured that its spectrum is $2k-n$, for $k=0,1,\ldots,n$.  The first proof of Sylvester's determinant formula is attributed by Muir to Francesco Mazza in $1866$ \cite[p.442]{M1923} 
(for another interesting similar matrix, the reader is referred to \cite[pp. 432-434]{M1923}).  However, we point out that there is a typo in the book of Muir. In fact, the correct formula should be read as
$$
\left|
  \begin{array}{ccccc}
    \lambda & a_1 &&& \\
    a_n & \lambda & a_2 && \\
    & a_{n-1} & \ddots & \ddots & \\
    && \ddots & \ddots & a_n \\
    &&& a_1 & \lambda \\
  \end{array}
\right|=
\left|
  \begin{array}{cc}
    \lambda & a_n \\
    a_n & \lambda \\
  \end{array}
\right|
\left|
  \begin{array}{cccc}
   \lambda & a_1 &&\\
   a_{n-2} & \ddots & \ddots & \\
   & \ddots & \ddots & a_{n-2} \\
   && a_1 & \lambda \\
  \end{array}
\right|\, ,
$$
where ``$a_1,a_2,\ldots,a_n$ are elements increasing by the common difference $a_1$".

Only many decades
after, these matrices and their eigenvalues were subject of study in a foundational paper of Schr\"odinger \cite{S1926}, but without a proof (see \cite{BR2007,TT1991}). It appears that Mark Kac \cite{K1947} in
$1947$ was in fact the first to prove the formula, using the method of generating functions, and to provide a polynomial characterization of the eigenvectors. 
Results on the spectrum of this matrix were independently rediscovered and extended by R\'ozsa
\cite{R1957}, Clement \cite{C1959}, Vincze \cite{V1964}, Taussky and Todd \cite{TT1991}, and
Edelman and Kostlan \cite{EK1994} based on different approaches.

Applications of this simply-structured tridiagonal matrix range from
linear algebra, orthogonal polynomials, numerical analysis, graph theory, statistics,
to physical models and biogeography
\cite{A2009,A2005,BCN1989,BCD2007,CE2012,EK1995,EK1994,F2007,H1998,IS2011,I2002,MT2005,R1957}.

One of the most important non-trivial and interesting extensions of the Sylvester-Kac matrices is the tridiagonal matrix for Krawtchouk polynomials, defined by
\begin{small}
\begin{equation} K(p,n) =
   \left( \begin{array}{ccccccc}
     -pn &  pn &&&&&\\
     1-p & -pn-(1-2p) & p(n-1) &&&& \\
     & 2(1-p) & -pn-2(1-2p) &  p(n-2) &&& \\
     && 3(1-p) & \ddots & \ddots  && \\
     &&& \ddots & \ddots & p &\\
     &&&& n(1-p) & -(1-p)n & \\
\end{array} \right) .
\label{K}
\end{equation}
\end{small}

In 2005, using distinct techniques, Richard Askey \cite{A2005} and Olga Holtz
\cite{H2005} obtained simultaneously the eigenvalues for $K(p,n)$ and, later, Chu and Wang \cite{CW2008}
described the associated eigenvectors. 

\begin{theorem}[\cite{A2005,CW2008,H2005}]
The characteristic polynomial of $K(p,n)$ defined in \eqref{K} is
$$
\det (K(p,n)-xI_{n+1})=\prod_{k=0}^{n}(x+k)
$$
and, for $k=0,1,\ldots,n$ the eigenvector $u_k=(u_{k,0},u_{k,1},\ldots,u_{k,n})$ associated with eigenvalue $-k$,
has elements
$$
u_{k,\ell}=\sum_{j=0}^{\min\{\ell,k\}}(-1)^{\ell-j}\binom{\ell}{j}\binom{n-j}{k-j}p^{-j}\, ,
$$
for $\ell=0,1,\ldots,n$.
\end{theorem}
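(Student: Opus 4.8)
The plan is to prove both assertions at once by verifying directly that the stated vectors are eigenvectors, and then reading off the characteristic polynomial for free. Since $K(p,n)$ is tridiagonal with all off-diagonal entries nonzero (for $0<p<1$), each eigenvalue is geometrically simple and an eigenvector is completely determined, up to scale, by the three-term recurrence obtained from the rows of $K(p,n)u=\lambda u$. Consequently it suffices to check that $u_k$ satisfies $K(p,n)u_k=-k\,u_k$ for each $k$; the numbers $0,-1,\dots,-n$ are then $n+1$ distinct eigenvalues of an $(n+1)\times(n+1)$ matrix, hence all of them, and $\det(K(p,n)-xI_{n+1})=\prod_{k=0}^n(x+k)$ follows with no separate determinant computation.

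Rather than attack the double sum for $u_{k,\ell}$ termwise, I would linearize the bookkeeping with the generating function $G_\ell(z)=\sum_{k=0}^n u_{k,\ell}z^k$. Interchanging the order of summation and collapsing the inner binomial series yields the closed product form
$$G_\ell(z)=(1+z)^{\,n-\ell}\left(\frac{(1-p)z-p}{p}\right)^{\!\ell}.$$
One first sums over $k$ with $j$ fixed, using $\sum_k\binom{n-j}{k-j}z^k=z^j(1+z)^{n-j}$, and then sums the resulting series in $j$ by the binomial theorem. This compact form is the engine of the whole argument.

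Next I would convert the eigenvector condition into a single identity among the $G_\ell$. Reading row $\ell$ of $K(p,n)u_k=-k\,u_k$ gives the three-term recurrence
$$\ell(1-p)\,u_{k,\ell-1}+p(n-\ell)\,u_{k,\ell+1}=\bigl(pn+\ell(1-2p)-k\bigr)u_{k,\ell};$$
multiplying by $z^k$, summing over $k$, and using $\sum_k k\,u_{k,\ell}z^k=zG_\ell'(z)$ turns this into one functional equation relating $G_{\ell-1}$, $G_\ell$, $G_{\ell+1}$ and $zG_\ell'$. Substituting the product form and cancelling the common factor $(1+z)^{\,n-\ell-1}\bigl((1-p)z-p\bigr)^{\ell-1}$ reduces the whole family of recurrences to a single polynomial identity in $z$ of degree two, settled by comparing its three coefficients. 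The boundary rows $\ell=0$ and $\ell=n$ need no special handling, since the prefactors $\ell(1-p)$ and $p(n-\ell)$ automatically kill the out-of-range terms $u_{k,-1}$ and $u_{k,n+1}$.

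The step I expect to be the genuine obstacle is not the final coefficient comparison, which is routine once set up, but getting to it cleanly: securing the closed form for $G_\ell$ and, above all, translating the matrix recurrence into the generating-function identity with every coefficient and sign correct. The factor $p^{-j}$ and the alternating $(-1)^{\ell-j}$ make the sign bookkeeping delicate, so I would first sanity-check the entire scheme on the smallest cases $n=1,2$ before trusting the general manipulation.
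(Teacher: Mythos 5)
Your strategy is sound and, for the record, genuinely different from anything in the paper: the paper states this theorem without proof (it is imported from Askey, Holtz, and Chu--Wang), and its own proof of the analogous Theorem for $M$ proceeds by brute-force row-by-row verification of the eigenvalue equation, with the boundary rows and the extreme powers of $\eta$ split off as special cases. Your generating-function reduction of the whole family of recurrences to a single quadratic identity in $z$ is cleaner and would replace all of that case analysis; the irreducible-tridiagonal/distinct-eigenvalues argument for deducing the characteristic polynomial is also correct, modulo the remark that $\det(K-xI_{n+1})$ has leading coefficient $(-1)^{n+1}$, so what your argument actually delivers is $\det(xI_{n+1}-K)=\prod_{k=0}^n(x+k)$; the displayed determinant formula is itself off by a sign when $n$ is even.

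The genuine obstacle, however, is not where you expect it: the ``routine'' final coefficient comparison fails, because the eigenvector formula as printed is inconsistent with the matrix $K(p,n)$ as printed. Your own proposed sanity check at $n=1$ exposes this: the formula gives $u_0=(1,-1)$, while the kernel of $K(p,1)=\left(\begin{smallmatrix}-p & p\\ 1-p & -(1-p)\end{smallmatrix}\right)$ is spanned by $(1,1)$. In generating-function terms, the $\ell=0$ row requires $pn\,G_1=pn\,G_0-zG_0'$, but with your (correctly derived) closed form $G_1=(1+z)^{n-1}\bigl((1-p)z-p\bigr)/p$ the two sides come out as exact negatives of each other, and the same global sign clash appears in the $z^2$ coefficient of the general quadratic identity. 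The discrepancy is a factor of $(-1)^\ell$ in the $\ell$-th component: the printed formula is an eigenvector of $DK(p,n)D^{-1}$ with $D=\mathrm{diag}(1,-1,1,\dots)$, i.e.\ of the matrix with negated off-diagonals, not of $K(p,n)$ itself. Replacing $(-1)^{\ell-j}p^{-j}$ by $(-p)^{-j}$ in the statement --- equivalently taking $G_\ell(z)=(1+z)^{n-\ell}\bigl((p-(1-p)z)/p\bigr)^\ell$ --- the functional equation does hold and all three coefficients match, so with that one sign repaired your proof goes through exactly as planned.
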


The matrix $M$ that we study in this note contains, as a particular case, the matrix $K(p,n)$ by transposition.

\section{Eigenpairs for $M$}

The equilibrium (long-time) behavior of the probabilities $Q_k$ are components of the eigenvector of $M$ corresponding to eigenvalue zero.  These static values of $Q_k$ can easily be extracted directly from the rate equation \eqref{rateeq} by setting  $dQ_k / dt = 0$.  This yields a three-term recursion relation for $Q_k$ that begins at $k=1$ as a two-term expression.  Setting $Q_0=1$ (arbitrary, and possible as long as $\alpha \neq 0$) allows us to evaluate each successive $Q_k$:
\[ Q_k = \eta^{-k} \frac{n!}{k!(n-k)!}\,  , \]
where $\eta:= \beta / \alpha$.

The relative probabilities $Q_k$, when divided by the sum
\[ \sum_{k=0}^{n} Q_k = \sum_{k=0}^{n} \eta^{-k} \frac{n!}{k!(n-k)!} = \left(\frac{1+\eta}{\eta}\right)^n ,\]
produce absolute probabilities
\[ Q'_k = \frac{\eta^{n-k}}{(1+\eta)^{n}} \frac{n!}{k!(n-k)!}\, . \]
From these we can also calculate the average coverage:
\[ \sum_{k=0}^{n} k\, Q'_k = \sum_{n=0}^n\frac{\eta^{n-k}}{(1+\eta)^n} \frac{n!}{(k-1)!(n-k)!} = n \frac{1}{1+\eta} =  n \frac{\alpha}{\alpha+\beta}\, , \]
confirming a result easily derived as well from mean-field analysis.

More generally, the full spectrum of eigenvalues and corresponding eigenvectors for the Sylvester-Kac matrix defined in \eqref{M} is as follows:

\begin{theorem} \label{main}
The eigenvalues of the matrix $M$ of order $n+1$  defined in \eqref{M} are
$$
\lambda_k = -k(\alpha + \beta)\, ,
$$
for $k=0,1,\ldots,n$; and $u_k=(u_{k,0},u_{k,1},\ldots,u_{k,n})$ is the eigenvector associated with $\lambda_k$,
with
\begin{equation} \label{ukl}
u_{k,\ell}=\sum_{j=\ell-k}^{n-k}  (-)^{k+\ell+j} \eta^{n-k-j} \binom{k}{\ell-j} \binom{n-k}{j}\, ,
\end{equation}
for $\ell=0,1,\ldots,n$, where $\eta \equiv \beta/\alpha$.
\end{theorem}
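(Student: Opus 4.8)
The plan is to verify the claimed eigenpairs directly, carrying out the eigenvector computation through a generating function. First I would write $M u_k = \lambda_k u_k$ row by row with $\lambda_k=-k(\alpha+\beta)$. Using the convention $u_{k,-1}=u_{k,n+1}=0$, dividing the $\ell$-th row by $\alpha$, and writing $\eta=\beta/\alpha$, the eigenvector equation becomes the single three-term recurrence
\[
(n-\ell+1)\,u_{k,\ell-1} + (\ell+1)\eta\, u_{k,\ell+1} = \bigl[(n-\ell-k)+(\ell-k)\eta\bigr]u_{k,\ell},
\]
valid for $\ell=0,1,\ldots,n$. A quick check shows that the two boundary rows $\ell=0$ and $\ell=n$ are exactly the corresponding instances of this uniform recurrence once the virtual entries $u_{k,-1}$ and $u_{k,n+1}$ are set to zero, so it suffices to establish the recurrence for every integer $\ell$.

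Next I would encode the components in the generating polynomial $f_k(x)=\sum_{\ell=0}^{n}u_{k,\ell}\,x^{\ell}$. Substituting the closed form \eqref{ukl} and collapsing the double sum by the change of variable $m=\ell-j$ factors it as a product of two elementary binomial sums, yielding
\[
f_k(x)=(-1)^{k}(1-x)^{k}(x+\eta)^{n-k}.
\]
In particular $f_k$ is a genuine polynomial of degree $n$ with nonzero constant term (as $\alpha\neq 0$), so $u_k\neq 0$ and the entries with $\ell<0$ or $\ell>n$ vanish automatically, justifying the virtual-entry convention.

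I would then translate the recurrence into a differential equation for $f_k$. Multiplying the recurrence by $x^{\ell}$ and summing over all $\ell$, each index shift becomes multiplication by $x$ while the coefficients that are linear in $\ell$ become the operator $x\,d/dx$; the $\ell$-sum collapses to the first-order linear ODE
\[
(1-x)(x+\eta)\,f_k'(x) + \bigl[n(x-1)+k(1+\eta)\bigr]f_k(x)=0.
\]
The last step is a one-line verification that the closed form solves this ODE: from the product form its logarithmic derivative is $f_k'/f_k = -k/(1-x)+(n-k)/(x+\eta)$, and substituting this makes the bracketed polynomial cancel identically. Hence each $u_k$ is a nonzero eigenvector with eigenvalue $-k(\alpha+\beta)$; since $\alpha+\beta\neq 0$ these $n+1$ eigenvalues are distinct, so they exhaust the spectrum of the $(n+1)\times(n+1)$ matrix and the $u_k$ form a complete eigenbasis. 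As an independent check, the eigenvalues alone follow from the cited Krawtchouk result via $M^{\mathsf T}=(\alpha+\beta)\,K(p,n)$ with $p=\alpha/(\alpha+\beta)$.

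The step I expect to be the main obstacle is the translation of the three-term recurrence with its $\ell$-dependent coefficients into the correct first-order ODE: one must track carefully which index shift attaches to which power of $x$, and confirm that the boundary instances $\ell=0$ and $\ell=n$ are genuinely subsumed rather than generating spurious endpoint contributions. Once the ODE is assembled correctly, both the generating-function identity and the final cancellation are short computations.
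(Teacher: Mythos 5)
Your proof is correct, but it takes a genuinely different route from the paper. The paper proves the theorem by brute-force verification of the eigenvalue equation row by row: the first and last rows are checked directly, and for a generic row $\ell$ the left-hand side is split into four sums $T_1,\dots,T_4$, each reindexed so that like powers of $\eta$ can be collected, after which the coefficient of $\eta^0$, of $\eta^{n-\ell+1}$, and of a generic $\eta^q$ are each shown to vanish by binomial manipulations (``patient algebra''). You instead package the components into the generating polynomial $f_k(x)=\sum_\ell u_{k,\ell}x^\ell$, observe that the double sum \eqref{ukl} factors as $f_k(x)=(-1)^k(1-x)^k(x+\eta)^{n-k}$, convert the uniform three-term recurrence into the first-order ODE $(1-x)(x+\eta)f_k'+\bigl[n(x-1)+k(1+\eta)\bigr]f_k=0$, and verify it by logarithmic differentiation; I checked the ODE derivation (via $\theta=x\,d/dx$ and the shifts $\ell\mapsto\ell\pm1$) and the final cancellation, and both are right, as is the subsumption of the boundary rows since $\deg f_k=n$ forces the virtual coefficients to vanish. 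Your approach buys a closed product form for the eigenvector polynomials (in the spirit of Kac's original generating-function method), eliminates the case analysis on powers of $\eta$ and the binomial identities, and as a bonus gives completeness of the eigenbasis from the distinctness of the $\lambda_k$; the paper's computation is more elementary but longer and leaves the final cancellation to the reader. One small blemish: the constant term $f_k(0)=(-1)^k\eta^{n-k}$ is \emph{not} nonzero when $\beta=0$ and $k<n$, but this is harmless since $f_k$ is a nonzero polynomial in any case, so $u_k\neq0$ still follows.
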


\begin{proof}
Our approach consists in checking the eigenvalue equation for every case.
For the first row of  $M$, we require
\[ -n\alpha u_{k,0} + \beta u_{k,1} = -k (\alpha + \beta) u_{k,0} \]
for all $k$. Dividing the equation by $\alpha$ and moving all terms to the left gives
\[ (-n+k+k \eta) u_{k,0} + \eta u_{k,1} = 0 . \]
Utilizing the explicit expressions for the $u_{k,\ell}$'s from \eqref{ukl} leads simply to
\[  (-n+k+k\eta)   (-)^{k} \eta^{n-k} -   (-)^{k} \eta^{n-k+1} k  + (-)^{k} \eta^{n-k} (n-k)  \]
which vanishes, confirming the eigenvalue equations per the top row of the matrix.

Similarly for the last row of $M$
\[  \alpha u_{k,n-1} - n \beta  u_{k,n} = -k (\alpha + \beta) u_{k,n} \]
for all $k$, equivalent to
\[  u_{k,n-1}+ (k +(k-n) \eta) u_{k,n} = 0\, . \]
Again utilizing explicit expressions for the $u_{k,\ell}$'s  we get only three non-vanishing terms
\[  \eta (n-k)-k+(k+(k-n) \eta) =0\, , \]
confirming the eigenvalue equations per the bottom row.

For all the other rows (e.g. row $\ell$) of $M$ the eigenvalue equations can be expressed as
\begin{equation}\label{lthrow}
(n-\ell+1) u_{k,\ell-1} +(k+\ell-n +(k-\ell)\eta) u_{k,\ell} + (\ell+1) \eta\, u_{k,\ell+1} = 0\, .
\end{equation}
For this calculation, it is useful to change the summation variable in each expression for $u_{k,\ell}$ in such a way that each term is a
sum over $q$ with the same power of $\eta$ in each term, i.e., $\eta^q$.  This transforms the first term in
\eqref{lthrow} into
\[ T_1 = -(n-\ell+1) \sum_{q=0}^{n-\ell+1} (-)^{n+\ell+q} \eta^q \binom{k}{k+\ell+q-1-n}\binom{n-k}{n-k-q}\, ; \]
the second term is best expressed as two sums (the second having a multiplicative $\eta$)
\[ T_2 = (k+\ell-n) \sum_{q=0}^{n-\ell} (-)^{n+\ell+q} \eta^q \binom{k}{k+\ell+q-n}\binom{n-k}{n-k-q} \]
and
\[ T_3 = -(k-\ell) \sum_{q=1}^{n-\ell+1} (-)^{n+\ell+q} \eta^q \binom{k}{k+\ell+q-1-n}\binom{n-k}{n+1-k-q}; \]
and the final term becomes
\[ T_4 = (\ell+1) \sum_{q=0}^{n-\ell+1} (-)^{n+\ell+q} \eta^q \binom{k}{k+\ell+q-n}\binom{n-k}{n+1-k-q}\, . \]

Because of the uneven summation limits we must look at two special cases.  The coefficient of $\eta^0$
in the sum of these four sums has contributions only from $T_1$ and $T_2$:

$$
-(n-\ell+1) (-)^{n+\ell} \binom{k}{k+\ell-1-n}\binom{n-k}{n-k}+(k+\ell-n)(-)^{n+\ell}  \binom{k}{k+\ell-n}\binom{n-k}{n-k}
$$
$$= -(n-\ell+1) (-)^{n+\ell} \frac{k!}{(k+\ell-n)!(n-\ell)!} \frac{k+\ell-n}{(n-\ell+1)} + (k+\ell-n)(-)^{n+\ell} \binom{k}{k+\ell-n} $$
easily seen to vanish, as desired.

The second special case involves the coefficient of $\eta^{n-\ell+1}$, which has contributions only from $T_1$ and $T_3$:

\noindent $\displaystyle{
-(n-\ell+1) (-)^{1} \binom{k}{k}\binom{n-k}{\ell-k-1}- (k-\ell)(-)^{1}  \binom{k}{k}\binom{n-k}{\ell-k}}$

$$= (n-\ell+1)  \binom{n-k}{\ell-k-1} + (k-\ell) \frac{(n-k)!}{(\ell-k-1)!(n-\ell+1)!} \frac{n-\ell+1}{\ell-k} $$

which also clearly vanishes.

The confirmation concludes with examination of the coefficient of $\eta^q$ ($0 < q < n-\ell+1$) in the sum of the four $T_i$ terms:
$$
  - (n-\ell+1) \binom{k}{k+\ell+q-1-n}\binom{n-k}{n-k-q}+(k+\ell-n) \binom{k}{k+\ell+q-n}\binom{n-k}{n-k-q}$$

$$- (k-\ell) \binom{k}{k+\ell+q-1-n}\binom{n-k}{n+1-k-q}+(\ell+1) \binom{k}{k+\ell+q-n}\binom{n-k}{n+1-k-q} $$

\noindent $\displaystyle{= \binom{k}{k+\ell+q-n}\binom{n-k}{n-k-q} \left(-(n-\ell+1) \, \frac{k+\ell+q-n}{n+1-\ell-q} + (k+\ell-n) \right. }$

\hfill $\displaystyle{-\left. (k-\ell)\,\frac{k+\ell+q-n}{n+1-\ell-q}\,\frac{q}{n+1-k-q} +(\ell-1)\frac{q}{n+1-k-q}  \right) .}$

\noindent Patient algebra  can confirm that the final bracketed expression vanishes, thereby completing the proof.
\end{proof}

Analytic software was used to find the eigenvalues and eigenvectors for the matrix
$M$ \eqref{M} for small values of $n$ (up to ten) and from these forms general expressions for arbitrary
$n$ were conjectured.  The method shown above was used to verify the conjectures.

We believe that Theorem \ref{main} will lead to new interesting relations to Krawtchouk, Hahn, 
and $q$-Racah polynomials as in \cite{A2005,BCD2007,CW2008,H2005}.
Some problems involving distance regular graphs can also be considered \cite{BCN1989,TT1991}. 
Moreover, the original connection with problems in statistical mechanics,
comprehensively studied by Mark Kac \cite{K1947}, can be extended.

\section{Acknowledgement}

We are grateful to the referee for the careful reading of our manuscript and the valuable comments.

\end{document}